\renewcommand{\natural}{{\mathbb{N}}} 
\newcommand{\real}{{\mathbb{R}}}
\newcommand{\union}{\cup}
\newcommand{\intersection}{\ensuremath{\operatorname{\cap}}}
\newcommand{\map}[3]{#1: #2 \rightarrow #3}
\newcommand{\until}[1]{\{1,\ldots,#1\}}
\newcommand{\CC}{\mathcal{C}} 
\newcommand{\EE}{\mathcal{E}} 
\newcommand{\GG}{\mathcal{G}}
\newcommand{\prox}{\mathbf{prox}}
\newcommand{\argmin}{\mathop{\rm argmin}}
\newcommand{\nbrs}{\mathcal{N}}
\newcommand{\StatexIndent}[1][3]{%
  \setlength\@tempdima{\algorithmicindent}%
  \Statex\hskip\dimexpr#1\@tempdima\relax}
\algnewcommand{\algorithmicgoto}{\textbf{go to }}%
\algnewcommand{\Goto}[1]{\algorithmicgoto Line~\ref{#1}}%
\algnewcommand{\Label}{\State\unskip}
\renewcommand{\algorithmicwhile}{\hskip\algorithmicindent \textbf{While:}}
\newcommand{\sx}[1]{\bar{x}_{#1}} 
\newtheorem{theorem}{Theorem}[section]
 \newtheorem{lemma}[theorem]{Lemma}
\newtheorem{remark}[theorem]{Remark}
\newtheorem{assumption}[theorem]{Assumption}
\def \algnamefull/{Coordinate Paritioned Descent}
\def \algname/{CPD}
\newcommand\oprocendsymbol{\hbox{$\square$}}
\newcommand\oprocend{\relax\ifmmode\else\unskip\hfill\fi\oprocendsymbol}
\def \asynch_alg/{Asynchronous Distributed Proximal Gradient}
\begin{document}


\title{A randomized primal distributed algorithm\\ for partitioned and
  big-data non-convex optimization}


\author{Ivano Notarnicola and Giuseppe Notarstefano
  \thanks{Ivano Notarnicola and Giuseppe Notarstefano are with the Department of Engineering,
    Universit\`a del Salento, Via Monteroni, 73100
    Lecce, Italy, \texttt{name.lastname@unisalento.it.} This result is part of a
    project that has received funding from the European Research Council (ERC)
    under the European Union's Horizon 2020 research and innovation programme
    (grant agreement No 638992 - OPT4SMART).  } }

\maketitle

\begin{abstract}
  In this paper we consider a distributed optimization scenario in which the
  aggregate objective function to minimize is partitioned, big-data and possibly
  non-convex. Specifically, we focus on a set-up in which the dimension of the
  decision variable depends on the network size as well as the number of local
  functions, but each local function handled by a node depends only on a (small)
  portion of the entire optimization variable. This problem set-up has been
  shown to appear in many interesting network application scenarios.  As main
  paper contribution, we develop a simple, primal distributed algorithm to solve
  the optimization problem, based on a randomized descent approach, which works
  under asynchronous gossip communication.
  We prove that the proposed asynchronous algorithm is a proper, ad-hoc version
  of a coordinate descent method and thus converges to a stationary point.
  To show the effectiveness of the proposed algorithm, we also present
  numerical simulations on a non-convex quadratic program, which confirm the
  theoretical results.

\begin{IEEEkeywords}
primal, non-convex, proximal, asynchronous, randomized, coordinate,
big-data, partitioned.
\end{IEEEkeywords}

\end{abstract}

\section{Introduction}
\label{sec:intro}
In several network scenarios optimization problems arise in which an aggregate cost
function, sum of local cost functions, needs to be minimized in a distributed
way. 
A typical approach in distributed optimization is to develop
algorithms in which the processors in the network reach consensus on a minimizer
of the problem. 
However, when the dimension of the decision variable depends on the number of
agents in the network the consensus approach gives rise to algorithms which
scale badly with the network size. 
Enforcing consensus on the entire vector of decision variables is not necessary
in many important applications, since the nodes are interested in computing only
part of the decision vector, namely only some local variables of interest.
In this paper we consider a partitioned problem set-up in which the aggregate
function is the sum of local functions, each one depending only on a portion of
the decision vector. 
For this set-up our goal is to design a distributed algorithm in which the nodes
compute only a local portion of interest of the entire solution vector, so that
the whole minimizer can be obtained by stacking together the local portions.

%
%
This partitioned set-up has been introduced in~\cite{erseghe2012distributed}
where a distributed ADMM-based algorithm is proposed. 
In~\cite{carli2013distributed} some concrete motivating scenarios are
described for the same set-up and a dual decomposition algorithm is proposed. In
both the above references the algorithms are designed for a synchronous network
with a fixed communication graph.
In~\cite{necoara2016parallel}, an analogous problem formulation is considered 
within a parallel context. The authors propose a coordinate
descent method and derive its convergence rate.
In \cite{todescato2015robust} the authors propose a distributed algorithm for a
partitioned quadratic program under lossy communication.  
%
A distributed ADMM-based algorithm with applications in MPC is proposed
in~\cite{mota2015distributed} to deal with an unconstrained optimization problem
with local domains which is related to the set-up in this paper.

%
%
Usually, distributed approaches need a common clock (e.g., because a diminishing
(time-varying) step-size is used).
We want to avoid this limitation designing an asynchronous, event-triggered
protocol based on local and independent timers, \cite{notarnicola2015randomized}.
A Newton-Raphson consensus strategy is proposed in~\cite{zanella2012asynchronous}
to solve unconstrained, convex optimization problems under asynchronous, 
symmetric gossip communications.
In~\cite{dimarogonas2012distributed} a self-triggered communication protocol is 
considered. Based on an error condition a distributed, continuous-time algorithm
is developed. 
In~\cite{wei20131} an asynchronous ADMM-based distributed method is proposed for a 
separable, constrained optimization problem with a convergence rate $O(1/t)$.
A distributed, asynchronous algorithm for constrained optimization based
on random projections is proposed in~\cite{lee2013asynchronous}.

%
The asynchronous, distributed algorithm we design in this paper is based on a
(randomized) coordinate descent method.
In~\cite{nesterov2012efficiency} the coordinate method for huge scale optimization
has been introduced. 
%
This powerful approach has been extended to deal with (convex) composite
objective functions and parallel scenarios,
see~\cite{richtarik2012parallel,richtarik2014iteration,necoara2016parallel}.
In~\cite{necoara2013random} a coordinate approach to solve linearly constrained
problems has been proposed.
Using a coordinate ADMM-based approach, in~\cite{bianchi2014stochastic}
a distributed, asynchronous algorithm is developed.

%
%
Regarding non-convex optimization problems, in~\cite{patrascu2015efficient}, the
authors extend the coordinate approach to large-scale non-convex optimization
proving the rate of convergence of their algorithms.
A parallel algorithm based on local strongly convex approximations is exploited 
in~\cite{facchinei2015parallel} to cope with non-convex optimization problems.
%
%
The latter approach has been extended to a distributed context in~\cite{dilorenzo2016next}.
In~\cite{binetti2014distributed}, the authors proposed an auction-based
distributed algorithm for non-convex optimization.

%
%
As main paper contribution we propose an asynchronous, distributed algorithm to
solve partitioned, big-data non-convex optimization problems. The proposed
primal algorithm is based on local updates involving the minimization of a
strongly convex, quadratic approximation of the objective function. Each node
constructs this approximation by exchanging information only with neighboring
nodes.  The updates at each node are regulated by a local timer that triggers
independently from the ones of the other nodes.
We prove the convergence in probability of the distributed algorithm by showing
that it is equivalent to a generalized coordinate descent method for the
minimization of non-convex composite functions.
The generalized coordinate descent algorithm extends the one proposed
in~\cite{patrascu2015efficient} and thus represents a side interesting result.






The paper is organized as follows. In Section~\ref{sec:setup} we present the
problem set-up. In Section~\ref{sec:algorithm} we propose our algorithm and
prove its convergence in Section~\ref{sec:analysis}. Finally, in
Section~\ref{sec:simulations} we show some
simulations.

\paragraph*{Notation}
Consider a vector $x\in\real^n$ partitioned in $N$ block-components as follows
\begin{align}
  x= [ x_1^\top, \ldots, x_N^\top ]^\top,
  \label{eq:x_partitions}
\end{align}
where, for all $i\in \until{N}$, we have $x_i\in \real^{n_i}$ and
$\sum_{i=1}^N n_i=n$.
Moreover, consider a block decomposition of the $N\times N$ identity 
matrix $I=[U_1, \ldots, U_N]$, where for all $i\in\until{N}$ each 
$U_i\in\real^{n\times n_i}$. Then we can write $x_i = U_i^\top x$ 
and $x = \sum_{i=1}^N U_ix_i$.
For a function $\map{\varphi}{\real^n}{\real}$, we denote $\nabla_{x_i}
\varphi(\bar{x}) = U_i^\top \nabla \varphi(\bar{x})$ the ``partial'' gradient of
$\varphi$ with respect to $x_i\in\real^{n_i}$.


\section{Optimization problem set-up}
\label{sec:setup}
We consider a network of $N$ nodes which can interact according to a
fixed, undirected communication graph $\GG = (\until{N}, \EE)$,
where
$\EE\subseteq \until{N} \times \until{N}$ is the set of edges. That is, the edge
$(i,j)$ models the fact that node $i$ and $j$ can exchange information.  We
denote by $\nbrs_i$ the set of \emph{neighbors} of node $i$ in the fixed graph
$\GG$, i.e., $\nbrs_i := \left\{j \in \until{N} \mid (i,j) \in \EE \right\}$,
and by $|\nbrs_i|$ its cardinality. Here we assume that the graph contains also
self-edges, so that $\nbrs_i$ contains also $i$.

We want to stress that the fixed graph only models, for each node, the set of
possible neighbors the node can communicate with. On top of this graph, we will
consider an asynchronous communication protocol described later.

We start by a common set-up in distributed optimization, i.e., the minimization
of a separable cost function composed by two contributions, i.e.,
$\min_{x\in\real^n} \: \: \sum_{i=1}^N f_i(x) + g_i(x)$,
where $f_i : \real^n \to \real$ and $g_i : \real^n \to \real \union\{+\infty\}$,
with $N,n \in\natural$. Usually this composite structure of the objective
functions, is used to split the effective cost into a smooth part (modeling some
local objective) and a (possibly) non-smooth one being a regularization term or
a constraint.\footnote{A constraint
  $x\in \intersection_{i\in\until{N}} X_i\subset\real^n$ is modeled by setting
  $g_i(x) = I_{X_i}(x)$, with $I_{X_i}(x)=0$ $\forall x\in X_i$ and
  $I_{X_i}(x)=+\infty$ otherwise.}

In this paper we consider problems in which the composite function
has 
a \emph{partitioned structure}, that we next describe. Let the decision
variable $x\in\real^n$ be partitioned as stated in~\eqref{eq:x_partitions}, then
the sub-vector $x_i\in \real^{n_i}$ with $n_i \ll n$, represents the relevant
information at node $i$.
Each local objective $f_i$ has a sparsity consistent with the interaction graph,
namely, for $i\in \until{N}$, the function $f_i$ depends only on the component
of node $i$ and of its neighbors. To highlight this property we let
$f_i:\real^{\sum_{j \in \nbrs_i} n_j} \rightarrow \real$ and
write $f_i ( x_{\nbrs_i} )$.  Also, each function $g_i$ depends only on the
component $x_i$, i.e., $g_i:\real^{n_i} \rightarrow \real \union \{+\infty\}$.

In light of the described structure, the problem we aim at solving in a
distributed way can be written as
\begin{align}
  \min_{x \in \real^n} \: &\: \sum_{i=1}^N f_i (x_{\nbrs_i} ) + g_i(x_i),
  \label{eq:partitioned_problem}
\end{align}  
where node $i$ knows only the functions $f_i$ and $g_i$. We call this problem
\emph{partitioned} (due to the structure of the functions $f_i$ and $g_i$) and
\emph{big-data} (since the dimension of the decision variable depends on the
number of nodes).

Note that, in this partitioned scenario, network structure and objective function 
are inherently related. That is, nodes that share a variable are neighbors in
the communication graph.
As pointed out in the introduction this set-up appears in
several interesting applications~\cite{carli2013distributed}.
In the following assumptions we state the main properties of problem
\eqref{eq:partitioned_problem}.
\begin{assumption}
  For all $i\in\until{N}$, $f_i$ 
  is a smooth function of $x_{\nbrs_i}$. In particular,
  $f_i$ has block-coordinate Lipschitz continuous gradient, i.e., 
  for all $j\in \nbrs_{i}$ there exists constants $L_{ij} > 0$ such that
  for all $x_{\nbrs_i} \in \real^{\sum_{\ell \in \nbrs_i } n_\ell}$ 
  and $s_j \in \real^{n_j}$ it holds
  \begin{align*}
    \| \nabla_{x_j} f_i(x_{\nbrs_i}+U_{ij} s_j) -  \nabla_{x_j} f_i(x_{\nbrs_i}) \| \le L_{ij} \| s_j\|.
  \end{align*}%
  where $U_{ij}$ is a suitable matrix such that $U_{ij} s_j$ is a vector in
  $\real^{\sum_{\ell \in \nbrs_{i}} n_\ell }$ with $j$-th block-component
  equal to $s_j$ and all the other ones equal to zero.
  \oprocend
  \label{ass:Lipschitz_f_i}
\end{assumption}

In light of Assumption~\ref{ass:Lipschitz_f_i}, it is easy to show
that the following lemma holds.
\begin{lemma}
  Let Assumption~\ref{ass:Lipschitz_f_i} hold, then the aggregate function
  $f(x) := \sum_{i=1}^N f_i ( x_{\nbrs_i} )$ has block-coordinate Lipschitz
  continuous gradient. In particular, for all $i\in\until{N}$, the partial
  gradient $\nabla_{x_i} f$ has Lipschitz constant given by
  $L_i := \sum_{j\in \nbrs_{i} } L_{ij}$.
  %
\end{lemma}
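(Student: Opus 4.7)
My plan is to write $f$ as a sum and then identify, for each block $i$, which summands actually depend on $x_i$. Since $f_k$ is a function of $x_{\nbrs_k}$, it depends on $x_i$ if and only if $i\in\nbrs_k$. Because the graph $\GG$ is undirected (and contains self-loops), this is equivalent to $k\in\nbrs_i$. Hence
\[
  \nabla_{x_i} f(x) \;=\; \sum_{k\in\nbrs_i} \nabla_{x_i} f_k(x_{\nbrs_k}).
\]

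Next, I would fix $x\in\real^n$ and a perturbation $s_i\in\real^{n_i}$ and estimate $\|\nabla_{x_i} f(x+U_i s_i) - \nabla_{x_i} f(x)\|$. By the triangle inequality this is bounded by $\sum_{k\in\nbrs_i}\|\nabla_{x_i} f_k(x_{\nbrs_k}+\tilde U_{ki} s_i) - \nabla_{x_i} f_k(x_{\nbrs_k})\|$, where $\tilde U_{ki}$ is the appropriate injection that places $s_i$ into the $i$-th block slot of $x_{\nbrs_k}$ (this is exactly the role of the matrix $U_{ij}$ introduced in Assumption~\ref{ass:Lipschitz_f_i}, with the indices matched to $f_k$). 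Applying the block-coordinate Lipschitz bound of Assumption~\ref{ass:Lipschitz_f_i} to each summand then yields $\sum_{k\in\nbrs_i} L_{ki}\|s_i\|$, which (relabeling the summation index) gives the claimed constant $L_i=\sum_{j\in\nbrs_i} L_{ij}$.

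The proof is essentially a bookkeeping exercise, so the only real subtlety is the indexing: one must verify that the undirected, self-looped structure of $\GG$ makes the two sums "$k$ such that $i\in\nbrs_k$" and "$k\in\nbrs_i$" coincide, and that the restriction of $\nabla f_k$ to its $i$-th block coordinate is well defined (which is guaranteed by $i\in\nbrs_k$). Once these are settled, the bound follows directly from the triangle inequality and the hypothesis, with no analytical difficulty. I would finally remark that the same argument extends to any perturbation $U_i s_i$ of $x$ and therefore establishes block-coordinate Lipschitz continuity of $\nabla f$ on all of $\real^n$, completing the proof.
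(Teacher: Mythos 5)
Your argument is correct and is essentially the paper's own (one-line) proof spelled out: decompose $\nabla_{x_i} f$ as $\sum_{k\in\nbrs_i}\nabla_{x_i} f_k$ (using the undirected, self-looped graph to convert ``$k$ with $i\in\nbrs_k$'' into ``$k\in\nbrs_i$''), apply the triangle inequality, and bound each summand via Assumption~\ref{ass:Lipschitz_f_i}. The only quibble is at the final step: your estimate yields $\sum_{k\in\nbrs_i}L_{ki}$, i.e.\ $\sum_{j\in\nbrs_i}L_{ji}$ after relabeling, whereas the lemma writes $\sum_{j\in\nbrs_i}L_{ij}$; under the indexing convention of Assumption~\ref{ass:Lipschitz_f_i} (where $L_{ij}$ governs $\nabla_{x_j}f_i$) your version is the correct one, and the mismatch is a transposed-index slip in the statement rather than a gap in your proof.
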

\begin{proof}
  The proof follows straight by simply writing the norm of the aggregate cost
  $f$ and then bounding each term of its gradient by using its block Lipschitz
  constant.
\end{proof}

\begin{remark}
  Note that one can assume directly that $\nabla_{x_i} f$ is Lipschitz
  continuous, but while the condition we impose can be checked in a distributed
  way, the weaker one needs a global knowledge of the cost $f$.\oprocend
\end{remark}

\begin{assumption}
  For all $i\in\until{N}$, the function $g_i$
  is a proper, closed, proper, convex
  function. 
  \oprocend
  \label{ass:convex_g_i}
\end{assumption}

We stress that we have not assumed any convexity condition on $f_i$, thus
optimization problem~\eqref{eq:partitioned_problem} is non-convex in general.
Finally, we state the following assumption which
is quite standard for non-convex scenarios.
\begin{assumption}
  The cost $V(x):=\sum_{i=1}^N f_i (x_{\nbrs_i} ) + g_i(x_i)$ 
  of problem~\eqref{eq:partitioned_problem} is a coercive function. 
  \oprocend
  \label{ass:coercive}
\end{assumption}

Assumption~\ref{ass:coercive} guarantees that at least a local 
minimum for problem~\eqref{eq:partitioned_problem} exists.

Figure~\ref{fig:partitions} visualizes the sparsity structure for a function
partitioned according to a path graph of $N = 4$ nodes.  Each $i$-th column
shows the variables on which $f_i$ depends, while along each $i$-th row it is
possible to see in which functions a variable $x_i$ appears. It is worth
noticing that the sparsity in the $i$-th row shows the consistency that needs to
be maintained among neighboring nodes on variable $x_i$.


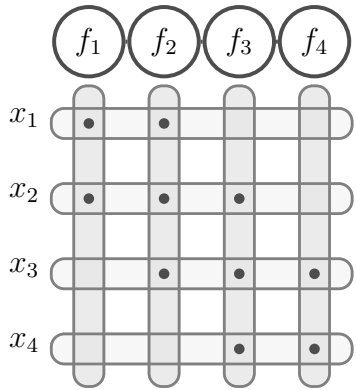
\begin{figure}[!ht]
  \centering
  \begin{tikzpicture}[scale=1]
%
%
%
    \def\width{0.2}
    \def\cornerWidth{5pt}
    
	  \foreach \y[evaluate=\y as \eval using int(5-\y)] in {1,...,4}
		  \draw[gray, line width=1pt, rounded corners=\cornerWidth, 
		            fill=gray!10,opacity=0.6]
		            (0.5,\eval - \width) rectangle (4.5, \eval + \width);

	  \foreach \x in {1,...,4}		
		  \draw[gray, line width=1pt, rounded corners=\cornerWidth,
		            fill=gray!25,opacity=0.6] 
		            (\x - \width , 0.5) rectangle (\x + \width, 4.5);

	  \foreach \y[evaluate=\y as \eval using int(5-\y)] in {1,...,4}
		  \draw[gray, line width=1pt, rounded corners=\cornerWidth]
		    (0.5,\eval - \width) node[above left,black]{$x_\y$} rectangle (4.5, \eval + \width);

	  \foreach \x in {1,...,4}		
      \draw[gray, line width=1pt, rounded corners=\cornerWidth]
          (\x - \width , 0.5) rectangle (\x + \width, 4.5) 
          node[above = 0.1cm, draw=black!70, line width=1.5pt,
            xshift=-0.2cm, minimum size = 0.5cm, circle, text=black] (f\x) {$f_{\x}$};

    \draw[line width=1.5pt, black!70] (f1) -- (f2);
    \draw[line width=1.5pt, black!70] (f2) -- (f3);
    \draw[line width=1.5pt, black!70] (f3) -- (f4);

    \fill[black!70] (4,1)  circle (2.pt);
    \fill[black!70] (1,3)  circle (2.pt);
    \fill[black!70] (1,4)  circle (2.pt);
	
    \fill[black!70] (2,2)  circle (2.pt);
    \fill[black!70] (2,3)  circle (2.pt);

    \fill[black!70] (3,3)  circle (2.pt);
    \fill[black!70] (3,1)  circle (2.pt);

    \fill[black!70] (3,2)  circle (2.pt);

    \fill[black!70] (4,2)  circle (2.pt);
    \fill[black!70] (2,4)  circle (2.pt);


\end{tikzpicture}

\caption{
  Partitioned optimization problem over a path graph of $N\!=\!4$ nodes.
  }
\label{fig:partitions}
\end{figure}


\section{Distributed optimization algorithm}
\label{sec:algorithm}
In this section we present our asynchronous distributed algorithm. 

In order to develop our algorithm, we need to introduce some 
technical tools: (i) the asynchronous communication protocol 
necessary to manage the overall behavior of the algorithm,
and (ii) the local approximation model that each node will use to perform
its local (descent) update.

We consider an asynchronous communication protocol where each node
$i\in\until{N}$ has its own concept of time defined by a local timer $\tau_i$,
which randomly and independently of the other nodes triggers when to awake
itself. The timers trigger according to exponential distributions with a common
parameter.
We denote $T_i$ a realization drawn by node $i$.
Between two triggering events the node is in an \emph{idle} mode, i.e., it
continuously receives messages from neighboring nodes and updates some internal
variables. When a trigger occurs, it switches into an \emph{awake} mode in which
it updates its local variable and transmits the updated information to its
neighbors.
A formal discussion on this protocol is given in~\cite{notarnicola2015randomized}.

The proposed distributed algorithm is based on \emph{local} quadratic, strongly-convex
approximations of the cost function that each node computes.

Formally, each node $i\in \until{N}$ constructs the following local
approximation of the entire cost function at a fixed $\bar{x} \in\real^n$
(neglecting the constant term $f(\bar{x})$ which does not affect the
optimization), 
\begin{align}
\label{eq:local_approximation_pb}
  q_i(s_i; \bar{x}) &\! := 
   \nabla_{x_i} f( \bar{x} ) ^\top s_i 
  + \frac{1}{2} \| s_i \|^2_{Q_i(\bar{x})} \! + g_i(\bar{x}_i + s_i)\\
  &\,=\sum_{j\in\nbrs_i} \!\! \nabla_{x_i} f_j( \bar{x}_{\nbrs_j} ) ^\top s_i 
  + \frac{1}{2} \| s_i \|^2_{Q_i(\bar{x})} \! + g_i(\bar{x}_i + s_i)  \notag
\end{align}
with $Q_i(x) \in \real^{n_i\times n_i}$ a symmetric, positive definite matrix
satisfying the following assumption. 
%
\begin{assumption}
  For any $x \in \real^n$ and $i\in\until{N}$ it holds that $Q_i(x)\succeq L_i I$.
  \oprocend
\label{ass:Hi}
\end{assumption}
Intuitively Assumption~\ref{ass:Hi} guarantees the strong convexity of
$q_i$. The role of the Lipschitz constant $L_i$ in the bound will be clear in the
analysis of the algorithm given in Section~\ref{sec:analysis}.

Informally, the asynchronous distributed optimization algorithms is as follows.
A node $i$ takes care of modifying the variable $x_i$. We denote $\sx{i}$ the
current state of node $i$, which is the estimated optimal value of the variable
$x_i$. Consistently we denote $\sx{\nbrs_i}$ the vector of states of nodes in
$\nbrs_i$.

When a node $i$ wakes up, it updates its state $\sx{i}$ by moving in the
direction obtained from the minimization of its local approximation
$q_i (s_i; \sx{})$, being $\sx{}$ the current value of the decision
variable. Then, it sends to each neighbor $j\in\nbrs_i$ the updated $x_i$ and
$\nabla_{x_j} f_i(\sx{\nbrs_i})$.
When in idle, node $i$ is in a listening mode. If an updated
$\nabla_{x_i} f_j(\sx{\nbrs_j})$ is received from a neighbor $j$ no computation
is needed. If $\sx{j}$ is also received ($j$ was an awake node) the following
happens. Node $i$ updates the partial gradients of its local function $f_i$
according to the new $\sx{j}$, and sends back the updated partial gradients to
its neighbors.
In order to highlight the difference between updated and old variables at node
$i$ during the awake phase, we denote the updated ones with a ``$+$'' symbol, e.g.,
as ${\sx{i}}^+$.

We want to stress two important aspects of the idle/awake cycle. First, these
two phases are regulated by local timers without the need of any central
clock. Second, when in idle a node only receives messages and from time to time
evaluates a partial gradient, which takes a negligible time compared to the
computation performed in the awake phase.

The distributed algorithm is formally reported in the table below (from the perspective of
node $i$). 
\begin{algorithm}
\renewcommand{\thealgorithm}{}
\floatname{algorithm}{Distributed Algorithm}
  \begin{algorithmic}[0] 
  
    \Statex \textbf{Processor state}:
    $\sx{i}$\medskip

    \Statex \textbf{Initialization}: set $\tau_i = 0$ and get a realization $T_i$ \medskip
    
    \Statex \textbf{Evolution}:
    \Label \texttt{\textbf{\textit{IDLE:}}}

      \StatexIndent[0.5] \textsc{while:} $\tau_i \leq T_i$ \textsc{do}:\medskip
      \StatexIndent[1] receive $\sx{j}$ and/or $\nabla_{x_i} f_j(\sx{\nbrs_j})$ from $j\in \nbrs_i$\medskip
      \StatexIndent[1] evaluate $\nabla_{x_j} f_i ( \sx{\nbrs_i} )$ and send it to $j\in\nbrs_i$
      \medskip

     \StatexIndent[0.5] go to \texttt{\textbf{\textit{AWAKE}}}.
      \vspace{0.2cm}

    \Label \texttt{\textbf{\textit{AWAKE:}}}

      \StatexIndent[0.5] \vspace{-0.8cm}

      \begin{flalign}
        \hspace*{0.8em}\text{compute} & \hspace{1cm} d_i = \argmin_{s_i} q_i (s_i; \sx{}) && 
        \label{eq:alg_descent}
      \end{flalign}
      \StatexIndent[0.5] \vspace{-0.6cm}
      \begin{flalign}
        \hspace*{0.8em}\text{update} & \hspace{1.7cm} \sx{i}^+ = \sx{i} + d_i && 
        \label{eq:alg_update}
      \end{flalign}      

     \StatexIndent[0.5] broadcast $\sx{i}^+$, $\nabla_{x_j} f_i (
     \sx{\nbrs_i}^+)$ to $j\in\nbrs_i$ \medskip
 
     \StatexIndent[0.5] set $\tau_i=0$, get a new realization $T_i$ and go to
     \texttt{\textbf{\textit{IDLE}}}.

  \end{algorithmic}
  \caption{Partitioned Coordinate Descent} 
  \label{alg:algorithm}
\end{algorithm}
%

We point out some aspects involving the local 
approximation~\eqref{eq:local_approximation_pb}
that each node uses in its local computations.

First, it is worth noting $q_i(s_i; \sx{})$ does not depend on the entire state
$\sx{}$, but only on $\sx{\nbrs_j}$, $j\in\nbrs_i$ and therefore is constructed
by node $i$ by using only information from its neighbors. Moreover, node $i$
does not needed the expression of neighboring cost functions $f_j$ to build
$q_i(s_i; \sx{})$, but only the gradients $\nabla_{x_i} f_j$. In some special
cases (discussed in the following paragraph), $Q_i(x)$ could include second order
information of $f_j$, $j\in\nbrs_i$, i.e., $\nabla_{x_i,x_i}^2 f_j$, that should
be sent together with the gradients. 

Second, different choices for the weight matrix $Q_i(x)$ are allowed.  
By exploiting the block Lipschitz continuity of the gradient of $f$, a 
first simple choice is to set $Q_i(x) := L_i I$ for all $i\in\until{N}$ and 
$x\in \real^n$. 
Motivated by existing works in the literature,
e.g.,~\cite{facchinei2015parallel}, non diagonal choices for $Q_i(x)$ are
reasonable: for instance, assuming $f\in \CC^2$, one can select a second order
approximation, i.e., set $Q_i(x):=\nabla^2_{x_i,x_i} f(x) + \epsilon_i I$ for a
sufficiently large $\epsilon_i>0$ for all $i\in\until{N}$.  As mentioned above
this information can be constructed in a distributed manner.

Third and final, recalling the definition of the proximal operator
$\map{\prox_{\alpha,\varphi}}{\real^n}{\real^n}$ of a closed, proper, convex
function $\map{\varphi}{\real^n}{\real\cup \{+\infty\}}$ given by
$\prox_{\alpha,\varphi} (v) := \argmin_{x} \big( \varphi(x) + \frac{1}{2\alpha}
\| x - v\|^2 \big)$
with $\alpha>0$, we have that for $Q_i(x)= L_i I$ the update law described
in~\eqref{eq:alg_descent}-\eqref{eq:alg_update}, can be rephrased in term of
proximal operators and, thus, leading to a distributed coordinate proximal
gradient method.
On this regard it is worth noting that our algorithm, with a general expression
for $Q_i$, can be written in terms of a generalized, weighted version of the
proximal operator as follows.
%
Given a positive definite matrix $W \in \real^{n \times n}$, we define
 \begin{align}
 \label{eq:gen_prox_operator}
   \prox_{W, \varphi} (v) := \argmin_x \Big\{ \varphi(x) +
     \frac{1}{2} \big\| x - v \big\|^2_{W^{-1}} \Big\},
 \end{align}
thus, 
the iteration~\eqref{eq:alg_descent}-\eqref{eq:alg_update} can be recast as
\begin{align*}
  \sx{i}^+ & = \prox_{Q_i(\sx{})^{-1}, g_i} \Big( \sx{i} - 
  Q_i(\sx{})^{-1} \sum_{j\in\nbrs_i} \!\! \nabla_{x_i} f_j( \sx{\nbrs_j} ) \Big).
\end{align*}

%

\section{Convergence analysis of the Partitioned Coordinate Descent distributed
  algorithm}
\label{sec:analysis}
In this section we prove the convergence in probability of the proposed
algorithm.

First, it is worth pointing out that being the algorithm asynchronous, for the
analysis we need to carefully formalize the concept of algorithm iterations.
We will use a nonnegative integer variable $t$ indexing a change in the whole
state $\sx{} = [\sx{1}^\top \ldots \sx{N}^\top]^\top$ of the distributed algorithm.
%
%
In particular, each triggering will induce an \emph{iteration} of the
distributed optimization algorithm and will be indexed with $t$. 
We want to stress that this (integer) variable $t$ 
does not need to be known by the agents.
That is, this timer is not a common clock and is only introduced for the
sake of analysis.

\begin{theorem}
  Let Assumptions~\ref{ass:Lipschitz_f_i}, \ref{ass:convex_g_i},
  \ref{ass:coercive} and~\ref{ass:Hi} hold true. Then, the Partitioned
  Coordinate Descent distributed algorithm generates a sequence
  $x(t):=[\sx{1}(t)^\top, \ldots, \sx{N}(t)^\top]^\top$ (obtained stacking the nodes'
  states) such that the random variable $V(x(t))$ converges almost surely, i.e.,
  there exists a random variable $V^\star$ such that
  \begin{align*}
    \Pr \Big( V(x(t) ) = V^\star \Big) = 1.
  \end{align*}

  Moreover, any limit point $x^\star$ of
  $[\sx{1}(t)^\top, \ldots, \sx{N}(t)^\top]^\top$ is a \emph{stationary} point
  of problem~\eqref{eq:partitioned_problem} and, thus, satisfies its first
  order optimality condition, i.e., there exists a subgradient $\widetilde\nabla g ( x^\star )$
  of $g$ at $x^\star$ such that $\nabla f( x^\star ) + \widetilde\nabla g ( x^\star ) = 0$.  \oprocend
  %
  \label{thm:nonconvex_coordinate}%
\end{theorem}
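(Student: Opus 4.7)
The plan is to recognize the asynchronous algorithm as a randomized block coordinate descent and then combine a deterministic per-step descent with a supermartingale argument. Because the local timers are i.i.d.\ exponentially distributed, almost surely no two nodes trigger at the same instant, and at each triggering event $t$ the identity $i(t)$ of the awakened node is uniform on $\{1,\ldots,N\}$, independently of the filtration $\mathcal{F}_t:=\sigma(x(0),\ldots,x(t))$. Consequently the process $\{x(t)\}$ is statistically equivalent to the randomized iteration that picks $i(t)$ uniformly and sets $x(t+1)=x(t)+U_{i(t)}\,d_{i(t)}(x(t))$, where $d_i(x):=\argmin_{s_i} q_i(s_i;x)$.

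The heart of the argument is a sufficient-decrease inequality. By Assumption~\ref{ass:Hi}, $q_i(\cdot;x)$ is $L_i$-strongly convex, so comparing its minimizer $d_i(x)$ with the competitor $s_i=0$ yields
\begin{align*}
  \nabla_{x_i} f(x)^{\!\top} d_i(x) + g_i(x_i+d_i(x)) - g_i(x_i)
  \leq -\tfrac{1}{2}\|d_i(x)\|^2_{Q_i(x)} - \tfrac{L_i}{2}\|d_i(x)\|^2.
\end{align*}
Combining this with the block descent lemma $f(x+U_i d_i(x))\leq f(x)+\nabla_{x_i} f(x)^{\!\top} d_i(x)+\tfrac{L_i}{2}\|d_i(x)\|^2$, a direct consequence of the Lemma following Assumption~\ref{ass:Lipschitz_f_i}, makes the two $\tfrac{L_i}{2}\|d_i\|^2$ contributions cancel and, invoking Assumption~\ref{ass:Hi} once more, produces the pathwise bound
\begin{align*}
  V(x(t+1))\leq V(x(t)) - \tfrac{1}{2}\|d_{i(t)}(x(t))\|^2_{Q_{i(t)}(x(t))} \leq V(x(t)) - \tfrac{L_{i(t)}}{2}\|d_{i(t)}(x(t))\|^2.
\end{align*}

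From this I would run the standard supermartingale argument. The sequence $V(x(t))$ is pathwise nonincreasing and bounded below (Assumption~\ref{ass:coercive} forces $V$ to attain a finite infimum on the initial sublevel set), so it converges almost surely to a finite random variable $V^\star$, settling the first claim. Conditioning on $\mathcal{F}_t$ and using the uniform distribution of $i(t)$ gives
\begin{align*}
  \mathbb{E}\bigl[V(x(t+1))\,\bigm|\,\mathcal{F}_t\bigr]\leq V(x(t))-\tfrac{1}{2N}\sum_{i=1}^{N} L_i\|d_i(x(t))\|^2,
\end{align*}
and telescoping together with Tonelli's theorem yields $\sum_{t}\sum_{i} L_i\|d_i(x(t))\|^2<\infty$ almost surely; in particular, $d_i(x(t))\to 0$ a.s.\ for every $i$.

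Finally, coercivity confines $\{x(t)\}$ to a compact sublevel set, so along almost every sample path convergent subsequences exist. If $x(t_k)\to x^\star$, continuity of the map $x\mapsto d_i(x)$, which follows from the $L_i$-strong convexity and joint continuity of the data of $q_i$ together with closedness of $g_i$, e.g.\ via nonexpansiveness of the weighted proximal operator~\eqref{eq:gen_prox_operator}, passes the limit $d_i(x(t_k))\to 0$ through the $\argmin$, so that $d_i(x^\star)=0$ for every $i$. Writing the first-order optimality of $0=\argmin_{s_i} q_i(s_i;x^\star)$ then yields subgradients $\widetilde\nabla g_i(x_i^\star)\in\partial g_i(x_i^\star)$ satisfying $\nabla_{x_i} f(x^\star)+\widetilde\nabla g_i(x_i^\star)=0$ for each $i$; stacking across blocks gives $\nabla f(x^\star)+\widetilde\nabla g(x^\star)=0$, i.e.\ the claimed stationarity. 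The main obstacle I anticipate is precisely this continuity step, since $g_i$ may be extended-real valued (indicator constraints included) and $Q_i(x)$ is allowed to depend on $x$; handling both simultaneously while keeping the $\argmin$ continuous on the sublevel set requires careful use of Moreau-envelope/nonexpansiveness tools.
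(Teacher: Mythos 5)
Your proof is mathematically sound and follows the same architecture as the paper: reduce the asynchronous algorithm to a uniformly randomized coordinate descent, establish a per-step sufficient decrease, and conclude via monotone convergence of $V$ plus a limit-point/optimality argument. Two remarks on where you diverge. First, you derive the descent inequality $V(x(t+1))\le V(x(t))-\tfrac{L_{i_t}}{2}\|d_{i_t}\|^2$ by comparing the minimizer of the $L_{i_t}$-strongly convex $q_{i_t}(\cdot;x(t))$ against the competitor $s_i=0$; the paper's Lemma~\ref{lem:descent_cost} obtains the identical bound from the first-order optimality condition~\eqref{eq:FNC} together with convexity of $g_{i_t}$. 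These are equivalent. You also spell out the supermartingale/Tonelli tail and the limit-point argument, which the paper outsources entirely to \cite[Theorem~1]{patrascu2015efficient}; note that the continuity of $x\mapsto d_i(x)$ you flag as the main obstacle can be sidestepped by passing to the limit directly in~\eqref{eq:FNC} along the convergent subsequence, using $d_i(x(t))\to 0$ (already established for every block $i$ from your summability bound) together with the closedness of the graph of $\partial g_i$ and continuity of $\nabla_{x_i} f$.

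The one genuine omission is the part of the paper's proof that is specific to the distributed, asynchronous setting: the \emph{state consistency} induction. You assert that the process is ``statistically equivalent'' to the centralized randomized iteration, but this requires showing that when node $i$ wakes up, the approximation $q_i(\cdot;\sx{})$ it minimizes is built from \emph{current} information, i.e., that the gradients $\nabla_{x_i}f_j(\sx{\nbrs_j})$, $j\in\nbrs_i$, stored at node $i$ reflect the latest states of all nodes in the two-hop neighborhood. In an asynchronous protocol this is exactly what can fail (stale gradients), and the paper proves it by induction: after node $i$ updates $\sx{i}$ and broadcasts, each idle neighbor $j$ recomputes its partial gradients and forwards them to its own neighbors, so consistency is preserved across triggering events. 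Without this step the identification of~\eqref{eq:alg_descent}--\eqref{eq:alg_update} with~\eqref{eq:cd_descent_dir}--\eqref{eq:cd_update} is unjustified. Everything else in your argument goes through once this is added.
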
%
%

\subsection{Coordinate descent method for composite non-convex minimization}
\label{sec:coordinate_method}
In this subsection we consider a more general composite optimization problem and
prove a result that is instrumental to the convergence proof of our distributed
algorithm. We introduce a generalization of the algorithm proposed
in~\cite{nesterov2013gradient,richtarik2014iteration,patrascu2015efficient}
based on the quadratic approximation introduced in~\eqref{eq:local_approximation_pb}.
We present the algorithm for problem~\eqref{eq:partitioned_problem}, but we want
to stress that the algorithm can be applied to a general function
$\map{f}{\real^n}{\real}$ with block-Lipschitz continuous gradient.
This will be clear from the analysis.

%

We consider a coordinate descent method based on selecting a random 
block-component, say $x_i$, of $x$ at each iteration and updating 
only $x_i$ through a suitable descent rule. 
The descent step is based on the quadratic approximation of the cost function
given in \eqref{eq:local_approximation_pb}.
The coordinate descent method is formally summarized in the table below. 
\begin{algorithm}[H]
  \renewcommand{\thealgorithm}{}
  \begin{algorithmic}[0] 
    \StatexIndent[-0.3] Choose a random block $i_t \! \in \! \until{N}$ with probability
    $p_{i_t}$

    \StatexIndent[-0.3] Compute a descent direction $d_{i_t}$ solving
    \begin{align}
    \label{eq:cd_descent_dir}
      d_{i_t} = \argmin_{s_i} \, q_{i_t} (s_i;x(t))
    \end{align}
    
    \StatexIndent[-0.3] Update the decision variable according to
      \begin{align}
      \label{eq:cd_update}
      \begin{split}
        x_{i_t}(t+1) & = x_{i_t}(t) + d_{i_t} 
        \\
        x_j(t+1) & = x_j(t), \hspace{1cm} \text{for all } j \neq {i_t}
      \end{split}
      \end{align}

  \end{algorithmic}
  \caption{Generalized Coordinate Descent Algorithm} 
  \label{alg:nonconvex_CD}
\end{algorithm}

In the following we present results for the theoretical convergence
of the generalized coordinate descent algorithm.

\begin{lemma}
Let Assumption~\ref{ass:Lipschitz_f_i}, \ref{ass:convex_g_i}, \ref{ass:Hi} hold.
Let $x(t)$ be the random sequence generated by Generalized Coordinate Descent Algorithm, 
then for all $t\ge 0$ it holds
\begin{align*}
  V(x(t+1)) \le V(x(t)) - \frac{L_{i_t} }{2} \|d_{i_t} \|^2.
\end{align*}
\label{lem:descent_cost}
\end{lemma}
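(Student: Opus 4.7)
The plan is to combine the standard block-Lipschitz descent inequality for $f$ with a sharpened optimality bound for the quadratic subproblem that uses the strong convexity of $q_{i_t}(\cdot;x(t))$.

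First I would observe that by the update rule~\eqref{eq:cd_update}, only the $i_t$-th block changes between $x(t)$ and $x(t+1)$, hence
\begin{align*}
  V(x(t+1)) - V(x(t)) = f(x(t+1)) - f(x(t)) + g_{i_t}(x_{i_t}(t) + d_{i_t}) - g_{i_t}(x_{i_t}(t)).
\end{align*}
For the smooth part I would apply the block-coordinate descent lemma associated with Assumption~\ref{ass:Lipschitz_f_i} (which, by the lemma in Section~\ref{sec:setup}, yields block-Lipschitz constant $L_{i_t}$ for $\nabla_{x_{i_t}} f$):
\begin{align*}
  f(x(t+1)) \le f(x(t)) + \nabla_{x_{i_t}} f(x(t))^\top d_{i_t} + \frac{L_{i_t}}{2} \| d_{i_t}\|^2.
\end{align*}

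Next I would exploit the structure of $q_{i_t}$. By Assumption~\ref{ass:Hi}, $Q_{i_t}(x(t)) \succeq L_{i_t} I$, so the quadratic plus $g_{i_t}$ piece makes $q_{i_t}(\cdot;x(t))$ strongly convex with modulus at least $L_{i_t}$. Since $d_{i_t}$ is its minimizer by~\eqref{eq:cd_descent_dir}, the strong convexity inequality evaluated at $s_i=0$ gives
\begin{align*}
  q_{i_t}(0;x(t)) \ge q_{i_t}(d_{i_t};x(t)) + \frac{L_{i_t}}{2} \| d_{i_t} \|^2,
\end{align*}
which after expanding the definition~\eqref{eq:local_approximation_pb} rearranges to
\begin{align*}
  \nabla_{x_{i_t}} f(x(t))^\top d_{i_t} + g_{i_t}(x_{i_t}(t)+d_{i_t}) - g_{i_t}(x_{i_t}(t)) \le -\frac{1}{2}\|d_{i_t}\|^2_{Q_{i_t}(x(t))} - \frac{L_{i_t}}{2}\|d_{i_t}\|^2.
\end{align*}

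Finally I would substitute the descent inequality for $f$ and the inequality above into the decomposition of $V(x(t+1)) - V(x(t))$. The two $\frac{L_{i_t}}{2}\|d_{i_t}\|^2$ contributions from the descent lemma and from strong convexity cancel, leaving the residual $-\tfrac{1}{2}\|d_{i_t}\|^2_{Q_{i_t}(x(t))}$, which by Assumption~\ref{ass:Hi} is bounded above by $-\tfrac{L_{i_t}}{2}\|d_{i_t}\|^2$. The only subtle point is the strong convexity step: one must be careful that strong convexity of a sum of a quadratic and a merely convex $g_{i_t}$ is inherited from the quadratic alone, and that the minimizer inequality for a strongly convex function (which does not require differentiability of $g_{i_t}$) is applied correctly. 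This is the main delicate ingredient, but it is standard; the rest is algebraic substitution.
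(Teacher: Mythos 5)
Your proof is correct, and it reaches the same intermediate bound $V(x(t+1)) \le V(x(t)) - \tfrac{1}{2}\|d_{i_t}\|^2_{Q_{i_t}(x(t))}$ as the paper before invoking Assumption~\ref{ass:Hi}, but it gets there by a different key step. The paper writes down the first-order optimality condition for the subproblem explicitly, $\nabla_{x_{i_t}} f(x(t)) + Q_{i_t}(x(t)) d_{i_t} + \widetilde{\nabla} g_{i_t}(x_{i_t}(t)+d_{i_t}) = 0$, and combines it with the subgradient (convexity) inequality for $g_{i_t}$ to eliminate the cross terms; it also first weakens the descent lemma by replacing $\tfrac{L_{i_t}}{2}\|s\|^2$ with $\tfrac{1}{2}\|s\|^2_{Q_{i_t}}$ via Assumption~\ref{ass:Hi}. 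You instead keep the descent lemma with the constant $L_{i_t}$ and exploit the minimizer-gap inequality $q_{i_t}(0;x(t)) \ge q_{i_t}(d_{i_t};x(t)) + \tfrac{L_{i_t}}{2}\|d_{i_t}\|^2$ for the $L_{i_t}$-strongly convex subproblem objective (strong convexity inherited from $Q_{i_t} \succeq L_{i_t} I$ alone, with $g_{i_t}$ merely convex and possibly nonsmooth, and using $0 \in \partial q_{i_t}(d_{i_t};x(t))$ at the minimizer). The two routes are mathematically equivalent in content — your minimizer-gap inequality is exactly what one obtains by chaining the paper's optimality condition with the subgradient inequality — but yours avoids naming an explicit subgradient and packages the argument in the form standard for proximal coordinate descent analyses, while the paper's version makes the subgradient $\widetilde{\nabla} g_{i_t}$ explicit, which it then reuses in the stationarity statement of Theorem~\ref{thm:theorem_nonconvex_coordinate}. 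Your handling of the delicate point (strong convexity of a sum of a quadratic and a nonsmooth convex function, and the minimizer inequality without differentiability) is correct.
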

%

%
\begin{proof}
  From Assumption~\ref{ass:Lipschitz_f_i} (Lipschitz continuity of $\nabla f$),
  we can write the well-known descent lemma (see \cite[Proposition
  A.24]{bertsekas1999nonlinear}), 
  for all $i\in\until{N}$ and for all $\bar{x}\in\real^n$
\begin{align*}
\begin{split}
  V( \bar{x}+U_is_i) & \le f( \bar{x} ) + \nabla_{x_i} f( \bar{x} ) ^\top s_i 
  \\
  & \hspace{0.4cm} + \frac{L_i}{2} \|s_i\|^2 + g_i(\bar{x}_i+s_i) + \textstyle \sum_{j\neq i} g_j(\bar{x}_j),
\end{split}
\end{align*}
with $U_i$ introduced in the \emph{Notation} paragraph.


Since $Q_i$ satisfies Assumption~\ref{ass:Hi}, then we can generalize the above
descent condition by introducing a uniform bound depending
on the Lipschitz constant of block $i$, i.e.,
\begin{align*}
\begin{split}
  V( \bar{x} +U_is_i) & \le f( \bar{x} ) + \nabla_{x_i} f( \bar{x} ) ^\top s_i 
  \\
  & + \frac{1}{2} \|s_i\|^2_{Q_i(\bar{x})} \!+\! g_i(\bar{x}_i+s_i) \!+\! \textstyle \sum_{j\neq i} g_j(\bar{x}_j)
\end{split}
\end{align*}

Due the partitioned structure of $f$, the explicit expression of $\nabla_{x_i} f(x)$ 
actually depends only on $f_j$, $j \in \nbrs_i$, thus the latter condition can be further 
rephrased as
\begin{align}
  V(\bar{x}+U_is_i) \leq q_i(s_i ; \bar{x} ) + f(\bar{x}) + \textstyle \sum_{j\neq i} g_j(\bar{x}_j).
\label{eq:descent_pb}
\end{align}
with $q_i(s_i ; \bar{x} )$ defined as in~\eqref{eq:local_approximation_pb}.

Consider a descent direction $d_{i_t}$ computed as in~\eqref{eq:cd_descent_dir},
then $d_{i_t}$ satisfies the first order necessary condition of optimality 
for problem~\eqref{eq:cd_descent_dir}
\begin{align}
\label{eq:FNC}
  \nabla_{x_{i_t}} f(x(t)) \!+\! Q_{i_t}(x(t)) d_{i_t} 
  \!+\! \widetilde{\nabla} g_{i_t} ( x_{i_t} (t)+d_{i_t}) = 0,
\end{align}
where $\widetilde{\nabla} g_{i_t} \in\real^{m_{i_t}}$ is a particular subgradient of $g_{i_t}$.

Starting form equation~\eqref{eq:descent_pb} with the following identification 
$\bar{x} = x(t)$ and $\bar{x}+U_{i_t} d_{i_t} = x(t+1)$, and adding and 
subtracting the term $g_{i_t} (x_{i_t}(t) )$ we obtain
\begin{align*}
\begin{split}
  V(x(t+1) ) & \le 
  V ( x(t) ) + \nabla_{x_{i_t}} f ( x(t) ) ^\top\! d_{i_t} \!\!+\! \frac{1}{2} \|d_{i_t}\|^2_{Q_{i_t} (x(t))}
  \\
  &\hspace{0.5cm}  + g_{i_t} (x_{i_t} (t) + d_{i_t} ) - g_{i_t} (x_{i_t} (t)) 
  \\
  & \le V ( x(t) ) + \nabla_{x_{i_t}} f ( x(t) ) ^\top d_i 
  \\
  &\hspace{0.5cm} + \frac{1}{2} \|d_{i_t}\|^2_{Q_{i_t}(x(t))} + \widetilde{\nabla} g_{i_t} (x_{i_t}(t) +d_{i_t})^\top d_{i_t}
  \\
  & \le V ( x(t) ) - \frac{1}{2} \|d_{i_t}\|^2_{Q_{i_t} (x(t))}
  \\
   & \le V ( x(t) ) - \frac{L_i}{2} \|d_{i_t}\|^2
\end{split}
\end{align*}
where we used the convexity of $g_{i_t}$, the optimality condition~\eqref{eq:FNC} and the uniform bound
in Assumption~\ref{ass:Hi}.
\end{proof}

\begin{theorem}
  Let Assumptions~\ref{ass:Lipschitz_f_i}, \ref{ass:convex_g_i},
  \ref{ass:coercive} and~\ref{ass:Hi} hold true.  Then, the Generalized
  Coordinate Descent Algorithm generates a sequence $x(t)$ such that the random
  variable $V(x(t))$ converges almost surely.  Moreover, any limit point
  $x^\star$ of $x(t)$ is a stationary point of $V$ and, thus, satisfies the
  first order necessary condition for optimality for
  problem~\eqref{eq:partitioned_problem}, i.e., there exists a subgradient
  $\widetilde \nabla g ( x^\star )$ of $g$ at $x^\star$ such that
  \begin{align*}
    \nabla f( x^\star ) + \widetilde\nabla g ( x^\star ) = 0 
  \end{align*}%
  \label{thm:theorem_nonconvex_coordinate}%
\end{theorem}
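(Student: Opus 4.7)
The plan is to combine the per-iteration descent inequality of Lemma~\ref{lem:descent_cost} with a supermartingale argument (Robbins--Siegmund) to obtain almost sure convergence of $V(x(t))$ and, simultaneously, summability of the squared block-directions. Once summability is established, I can pass to the limit along convergent subsequences and use the optimality condition~\eqref{eq:FNC} to conclude stationarity.

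More concretely, let $\bar{d}_i(x) := \argmin_{s_i} q_i(s_i; x)$ be the direction that \emph{would} be taken at $x$ for block $i$. Let $\mathcal{F}_t$ be the $\sigma$-algebra generated by the random block selections up to time $t$. Conditioning on $\mathcal{F}_t$, the random choice $i_t$ with probabilities $p_i>0$ and the pointwise descent inequality $V(x(t+1)) \le V(x(t)) - \frac{L_{i_t}}{2}\|d_{i_t}\|^2$ give
\begin{align*}
  E[V(x(t+1)) \mid \mathcal{F}_t] \le V(x(t)) - \sum_{i=1}^N p_i \frac{L_i}{2} \|\bar{d}_i(x(t))\|^2.
\end{align*}
Coercivity of $V$ (Assumption~\ref{ass:coercive}) and the monotone non-increase of $V(x(t))$ along every sample path imply $V$ is bounded below on the initial level set, so by the Robbins--Siegmund supermartingale theorem $V(x(t))$ converges almost surely to some random variable $V^\star$, and
\begin{align*}
\sum_{t=0}^\infty \sum_{i=1}^N p_i \frac{L_i}{2} \|\bar{d}_i(x(t))\|^2 < \infty \quad\text{a.s.}
\end{align*}
Since each $p_i>0$ and $L_i>0$, this forces $\bar{d}_i(x(t)) \to 0$ almost surely for every $i \in \until{N}$.

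For the stationarity claim, let $x^\star$ be any limit point of $\{x(t)\}$, with $x(t_k)\to x^\star$ along a subsequence. The argmin map $x\mapsto \bar{d}_i(x)$ is continuous: the objective $q_i(\cdot\,;x)$ is strongly convex (Assumption~\ref{ass:Hi}), the data $\nabla_{x_i}f(x)$ and $Q_i(x)$ depend continuously on $x$, and $g_i$ is closed proper convex (Assumption~\ref{ass:convex_g_i}), so a standard parametric-minimization argument applies. Combined with $\bar{d}_i(x(t)) \to 0$ along the full sequence, this gives $\bar{d}_i(x^\star)=0$ for all $i$. Now rewrite the first-order condition~\eqref{eq:FNC} as the inclusion
\begin{align*}
  -\nabla_{x_i} f(x(t_k)) - Q_i(x(t_k))\,\bar{d}_i(x(t_k)) \in \partial g_i\bigl(x_i(t_k) + \bar{d}_i(x(t_k))\bigr),
\end{align*}
and pass to the limit using outer semicontinuity of the convex subdifferential $\partial g_i$. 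This yields $-\nabla_{x_i} f(x^\star) \in \partial g_i(x_i^\star)$ for every $i$, which stacked over $i$ gives $\nabla f(x^\star)+\widetilde\nabla g(x^\star)=0$.

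The main obstacle I anticipate is the limit-passage step: establishing continuity of $\bar{d}_i(\cdot)$ and the closedness of the graph of $\partial g_i$ at $x^\star$ both rely on the convexity and closedness of $g_i$ together with the uniform positive-definiteness of $Q_i(x)$. If $Q_i(x)$ were allowed to degenerate or blow up as $x$ varies, continuity of the argmin would fail. One may need a mild additional uniform upper bound on $Q_i(x)$ over the initial level set (which is compact by coercivity) to make the continuity argument fully rigorous; otherwise, all other steps are standard manipulations of the supermartingale inequality and the subgradient inclusion.
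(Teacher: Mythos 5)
Your proof is correct in substance, but it is worth noting that the paper does not actually prove this theorem: its ``proof'' is a one-line deferral to \cite[Theorem~1]{patrascu2015efficient}, with Lemma~\ref{lem:descent_cost} substituted for the descent lemma used there. What you have written is essentially the argument that the cited reference carries out, now made self-contained: the conditional-expectation form of the descent inequality, Robbins--Siegmund to get almost sure convergence of $V(x(t))$ together with summability of $\sum_i p_i \tfrac{L_i}{2}\|\bar d_i(x(t))\|^2$ (which is the right way to conclude $\bar d_i(x(t))\to 0$ for \emph{every} block, not just the selected ones), and then a limit passage in the inclusion $-\nabla_{x_i}f(x(t_k)) - Q_i(x(t_k))\bar d_i(x(t_k)) \in \partial g_i\bigl(x_i(t_k)+\bar d_i(x(t_k))\bigr)$ via closedness of the graph of $\partial g_i$. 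Two remarks. First, the caveat you raise is genuine: Assumption~\ref{ass:Hi} only bounds $Q_i(x)$ from \emph{below}, so the limit passage needs an additional upper bound on $Q_i(\cdot)$ over the (compact, by coercivity and monotone decrease) initial sublevel set in order to conclude $Q_i(x(t_k))\bar d_i(x(t_k))\to 0$; the paper is silent on this, though its suggested choices $Q_i = L_i I$ and $Q_i = \nabla^2_{x_i,x_i}f + \epsilon_i I$ satisfy it. Second, your continuity-of-the-argmin step is redundant: you do not need $\bar d_i(x^\star)=0$ as a separate fact, since the subgradient inclusion together with $\bar d_i(x(t_k))\to 0$ and the $Q_i$ bound already delivers $-\nabla_{x_i}f(x^\star)\in\partial g_i(x_i^\star)$; dropping it also removes the need for continuity of $x\mapsto Q_i(x)$. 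With that trimming, your argument is a complete and correct replacement for the paper's citation.
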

\begin{proof}
  The result is proven by following the same line as in~\cite[Theorem~1]{patrascu2015efficient} 
  where the generalized Lemma~\ref{lem:descent_cost} is used in place 
  of~\cite[Lemma~3]{patrascu2015efficient}.
\end{proof}

\subsection{Proof of Theorem~\ref{thm:nonconvex_coordinate}}
\label{sec:theorem_proof}
Our proof strategy is based on showing that the iterations of the asynchronous
distributed algorithm can be written as the iterations of an ad-hoc version of
the coordinate descent method for composite non-convex functions given in
Section~\ref{sec:coordinate_method}.


\emph{Timer model and uniform node extraction.}
%
Since the timers trigger independently according to the same exponential
distribution, then from an external, global perspective, the induced awaking
process of the nodes corresponds to the following: only one node per iteration
wakes up randomly, uniformly and independently from previous iterations.
Thus, each triggering, which induces an \emph{iteration} of the distributed
optimization algorithm and is indexed with $t$, corresponds to the (uniform)
selection of a node in $\until{N}$ that becomes awake.
We denote $i_t$ the extracted node. Notice that node $i_t$ changes the value of
its state $\sx{i_t}$ while all the other states are not changed by the
algorithm.


\emph{State consistency (inductive argument).} Next we show by induction that if
all the nodes have a consistent and updated information before a node $i$ gets
awake, then the same holds after the update. By consistent we mean that for a
variable $x_\ell$, all the nodes in $\nbrs_\ell$ have the same state
$\sx{\ell}$. By updated we mean that each node $\ell$ has an updated value of
the gradients $\nabla_{x_\ell} f_j$, $j\in\nbrs_\ell$.
First, node $i$ changes only its state $\sx{i}$ relative to the variable
$x_i$. This variable is shared only with neighbors $j\in\nbrs_i$, which receive the new
state $\sx{i}$ after the update. 
As regards the gradients, the ones affected by the change of the variable $x_i$
are $\nabla_{x_i} f_j$, with $j\in\nbrs_i$. Notice that these
gradients are only used by nodes $k\in\nbrs_j$. But after the broadcast
performed by $i$, each idle $j\in\nbrs_i$ receives the updated $\sx{i}$,
updates the gradients, and sends them to its neighbors $k\in\nbrs_j$. 
The variables and gradients for the rest of the nodes in the network are not
changed by the update of node $i$.


\emph{Coordinate descent equivalence and convergence.} Finally, we simply notice
that, thanks to the consistency argument just shown, steps
\eqref{eq:alg_descent}-\eqref{eq:alg_update} correspond to steps
\eqref{eq:cd_descent_dir}-\eqref{eq:cd_update}. 
Thus, we have shown that our distributed algorithm implements the centralized
coordinate method and therefore inherits its convergence properties.  By
invoking Theorem~\ref{thm:theorem_nonconvex_coordinate}, the proof follows.






\section{Numerical simulations on a non-convex constrained quadratic program}
\label{sec:simulations}

In this section we present a numerical example showing the 
effectiveness of the proposed algorithm.

We consider an undirected connected Erd\H{o}s-R\'enyi random 
graph $\GG$, with parameter $0.2$,  connecting $N=50$ nodes 
and we test the distributed algorithm on a partitioned non-convex 
constrained quadratic program in the form
\begin{align}
  \min_{x\in\real^n} &\: \sum_{i=1}^N x_{\nbrs_i}^\top 
    H_i x_{\nbrs_i} + r_i^\top x_{\nbrs_i}  + I_{X_i}(x_i),
  \label{eq:numerical_problem}
\end{align}
where each $x_i\in \real$ for all $i\in\until{N}$ and each cost matrix
$H_i \in \real^{|\nbrs_i|\times |\nbrs_i|}$ is only symmetric (\emph{not}
positive definite). We construct $H_i$ as the difference between a positive
definite matrix $\tilde{H}_i \in \real^{|\nbrs_i|\times |\nbrs_i|}$ and a
suitable scaled version of the identity matrix.  Finally, each function
$I_{X_i}$ denotes the indicator function of the segment $X_i = [-\ell_i,u_i]$,
i.e., we constrain each $x_i$ to lie into an interval.  We set $\ell_i = -30$
and $u_i = 20$ for all $i\in\until{N}$.

Problem~\eqref{eq:numerical_problem} fits our set-up described in 
Section~\ref{sec:setup} by defining
\begin{align*}
  f_i ( x_{\nbrs_i} ) := x_{\nbrs_i} ^\top H_i x_{\nbrs_i} + r_i^ \top x_{\nbrs_i} 
\end{align*}
and
\begin{align*}
  & g_i (x_i) := I_{X_i}(x_i) = 
    \begin{cases}
    x_i & \text{ if } \ell_i \le x_i \le u_i
    \\
    +\infty & \text{ otherwise.}
  \end{cases}
\end{align*}
Moreover, we use the local approximation $q_i(s_i,\sx{} )$
as in~\eqref{eq:local_approximation_pb} with 
$Q_i = \frac{1}{\alpha_i} I $
with $\alpha_i=0.01$ for all $i\in\until{N}$.

In Figure~\ref{fig:x} we plot the evolution of two selected components 
of the decision variable $x$ at each iteration $t$ (defined as discussed 
in Section~\ref{sec:analysis}), i.e., $x_i (t)$, $i = 14, 48$. The horizontal dotted 
lines represent the centralized solution.
Since the algorithm is asynchronous and based on a coordinate approach, we plot 
the rate of convergence with respect to the normalized iterations $t/N$ in order to show
the effective behavior with respect to the global time.
\begin{figure}[!htbp]
\centering
  \includegraphics[scale=0.8]{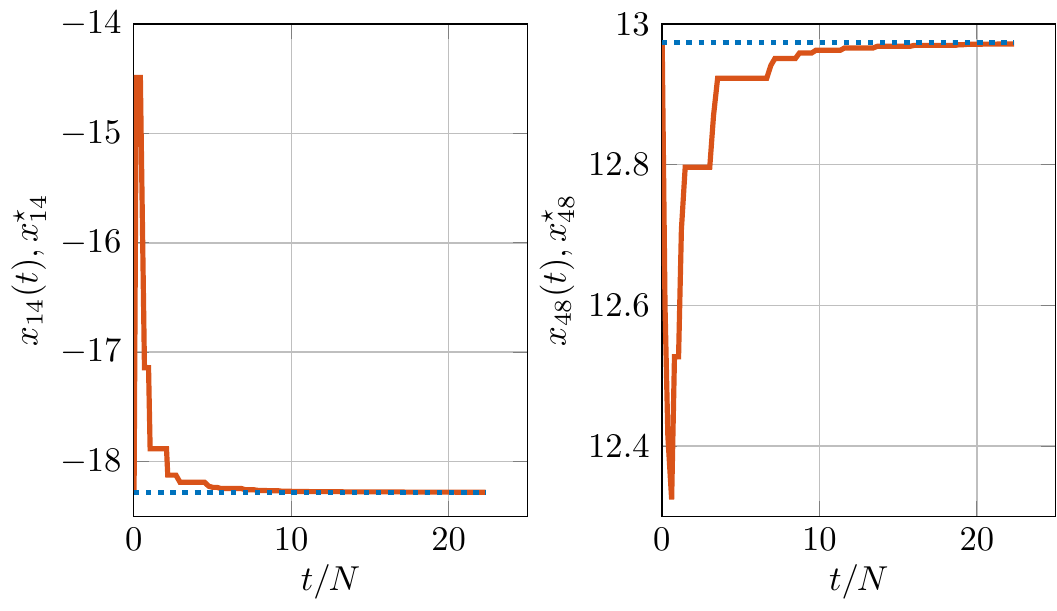}
  \caption{
    Evolution of two decision variables $x_i$, $i=14,48$, for 
    the distributed algorithm.
    }
  \label{fig:x}
\end{figure}

In Figure~\ref{fig:cost} we show the difference, in logarithmic scale, 
between the cost $V(x(t))$ at each iteration $t$ and the value of $V$ 
attained at the limit point $x^\star$ of $x(t)$ (proven to be a stationary point). 
\begin{figure}[!htbp]
\centering
  \includegraphics[scale=0.8]{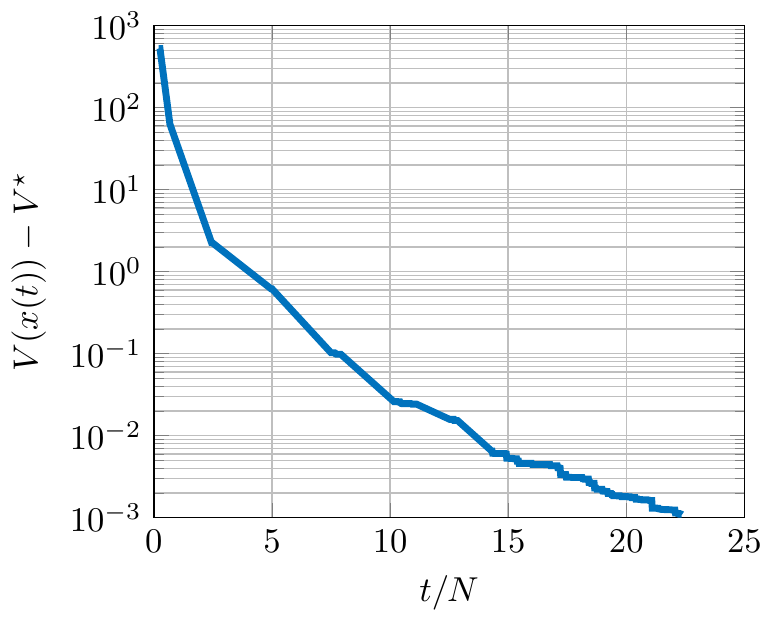}
  \caption{
    Evolution of the cost error, in logarithmic scale, for the distributed algorithm.
    }
  \label{fig:cost}
\end{figure}


\section{Conclusions}
\label{sec:conclusions}

In this paper we have proposed an asynchronous, distributed algorithm to solve
partitioned, big-data non-convex optimization problems.  The main idea is
that each node updates its local variable by minimizing a suitable, local
quadratic approximation of the cost, built via an information exchange with
neighboring nodes.  We prove the convergence of the distributed algorithm by
showing that it corresponds to a proper instance of a coordinate descent method.


\section*{Acknowledgments}
The authors would like to thank Angelo Coluccia e Massimo Frittelli for their
help and suggestions.



  \bibliographystyle{IEEEtran}
  \bibliography{proximal} 

\begin{thebibliography}{10}
\providecommand{\url}[1]{#1}
\csname url@samestyle\endcsname
\providecommand{\newblock}{\relax}
\providecommand{\bibinfo}[2]{#2}
\providecommand{\BIBentrySTDinterwordspacing}{\spaceskip=0pt\relax}
\providecommand{\BIBentryALTinterwordstretchfactor}{4}
\providecommand{\BIBentryALTinterwordspacing}{\spaceskip=\fontdimen2\font plus
\BIBentryALTinterwordstretchfactor\fontdimen3\font minus
  \fontdimen4\font\relax}
\providecommand{\BIBforeignlanguage}[2]{{%
\expandafter\ifx\csname l@#1\endcsname\relax
\typeout{** WARNING: IEEEtran.bst: No hyphenation pattern has been}%
\typeout{** loaded for the language `#1'. Using the pattern for}%
\typeout{** the default language instead.}%
\else
\language=\csname l@#1\endcsname
\fi
#2}}
\providecommand{\BIBdecl}{\relax}
\BIBdecl

\bibitem{erseghe2012distributed}
T.~Erseghe, ``A distributed and scalable processing method based upon admm,''
  \emph{IEEE Signal Processing Letters}, vol.~19, no.~9, pp. 563--566, 2012.

\bibitem{carli2013distributed}
R.~Carli and G.~Notarstefano, ``Distributed partition-based optimization via
  dual decomposition,'' in \emph{IEEE 52nd Annual Conference on Decision and
  Control (CDC)}, 2013, pp. 2979--2984.

\bibitem{necoara2016parallel}
I.~Necoara and D.~Clipici, ``Parallel random coordinate descent method for
  composite minimization: Convergence analysis and error bounds,'' \emph{SIAM
  Journal on Optimization}, vol.~26, no.~1, pp. 197--226, 2016.

\bibitem{todescato2015robust}
M.~Todescato, G.~Cavraro, R.~Carli, and L.~Schenato, ``A robust block-{J}acobi
  algorithm for quadratic programming under lossy communications,'' in
  \emph{IFAC-PapersOnLine}, vol.~48, no.~22.\hskip 1em plus 0.5em minus
  0.4em\relax Elsevier, 2015, pp. 126--131.

\bibitem{mota2015distributed}
J.~F. Mota, J.~M. Xavier, P.~M. Aguiar, and M.~Puschel, ``Distributed
  optimization with local domains: Applications in {M}{P}{C} and network
  flows,'' \emph{IEEE Transactions on Automatic Control}, vol.~60, no.~7, pp.
  2004--2009, 2015.

\bibitem{notarnicola2015randomized}
I.~Notarnicola and G.~Notarstefano, ``Randomized dual proximal gradient for
  large-scale distributed optimization,'' in \emph{IEEE 54th Conference on
  Decision and Control (CDC)}, 2015, pp. 712--717.

\bibitem{zanella2012asynchronous}
F.~Zanella, D.~Varagnolo, A.~Cenedese, G.~Pillonetto, and L.~Schenato,
  ``Asynchronous {N}ewton-{R}aphson consensus for distributed convex
  optimization,'' in \emph{3rd IFAC Workshop on Distributed Estimation and
  Control in Networked Systems}, 2012.

\bibitem{dimarogonas2012distributed}
D.~V. Dimarogonas, E.~Frazzoli, and K.~H. Johansson, ``Distributed
  event-triggered control for multi-agent systems,'' \emph{IEEE Transactions on
  Automatic Control}, vol.~57, no.~5, pp. 1291--1297, 2012.

\bibitem{wei20131}
E.~Wei and A.~Ozdaglar, ``On the ${O}(1/k)$ convergence of asynchronous
  distributed alternating direction method of multipliers,'' in \emph{IEEE
  Global Conference on Signal and Information Processing (GlobalSIP)}, 2013,
  pp. 551--554.

\bibitem{lee2013asynchronous}
S.~Lee and A.~Nedi{\'c}, ``Asynchronous gossip-based random projection
  algorithms over networks,'' \emph{arXiv preprint arXiv:1304.1757}, 2013.

\bibitem{nesterov2012efficiency}
Y.~Nesterov, ``Efficiency of coordinate descent methods on huge-scale
  optimization problems,'' \emph{SIAM Journal on Optimization}, vol.~22, no.~2,
  pp. 341--362, 2012.

\bibitem{richtarik2012parallel}
P.~Richt{\'a}rik and M.~Tak{\'a}{\v{c}}, ``Parallel coordinate descent methods
  for big data optimization,'' \emph{Mathematical Programming}, pp. 1--52,
  2012.

\bibitem{richtarik2014iteration}
------, ``Iteration complexity of randomized block-coordinate descent methods
  for minimizing a composite function,'' \emph{Mathematical Programming}, vol.
  144, no. 1-2, pp. 1--38, 2014.

\bibitem{necoara2013random}
I.~Necoara, ``Random coordinate descent algorithms for multi-agent convex
  optimization over networks,'' \emph{IEEE Transactions on Automatic Control},
  vol.~58, no.~8, pp. 2001--2012, 2013.

\bibitem{bianchi2014stochastic}
P.~Bianchi, W.~Hachem, and F.~Iutzeler, ``A stochastic primal-dual algorithm
  for distributed asynchronous composite optimization,'' in \emph{GlobalSIP},
  2014, pp. 732--736.

\bibitem{patrascu2015efficient}
A.~Patrascu and I.~Necoara, ``Efficient random coordinate descent algorithms
  for large-scale structured nonconvex optimization,'' \emph{Journal of Global
  Optimization}, vol.~61, no.~1, pp. 19--46, 2015.

\bibitem{facchinei2015parallel}
F.~Facchinei, G.~Scutari, and S.~Sagratella, ``Parallel selective algorithms
  for nonconvex big data optimization,'' \emph{IEEE Transactions on Signal
  Processing}, vol.~63, no.~7, pp. 1874--1889, 2015.

\bibitem{dilorenzo2016next}
P.~Di~Lorenzo and G.~Scutari, ``{N}{E}{X}{T}: In-network nonconvex
  optimization,'' \emph{IEEE Transactions on Signal and Information Processing
  over Networks}, vol.~2, no.~2, pp. 120--136, 2016.

\bibitem{binetti2014distributed}
G.~Binetti, A.~Davoudi, D.~Naso, B.~Turchiano, and F.~L. Lewis, ``A distributed
  auction-based algorithm for the nonconvex economic dispatch problem,''
  \emph{IEEE Transactions on Industrial Informatics}, vol.~10, no.~2, pp.
  1124--1132, 2014.

\bibitem{nesterov2013gradient}
Y.~Nesterov, ``Gradient methods for minimizing composite functions,''
  \emph{Mathematical Programming}, vol. 140, no.~1, pp. 125--161, 2013.

\bibitem{bertsekas1999nonlinear}
D.~P. Bertsekas, \emph{Nonlinear programming}.\hskip 1em plus 0.5em minus
  0.4em\relax Athena scientific, 1999.

\end{thebibliography}

\end{document}